\documentclass[runningheads]{llncs}
\usepackage{amssymb}

\usepackage[T1]{fontenc}
\usepackage{graphicx}
\begin{document}
\title{On Secret Sharing from Extended Norm-Trace Curves}
%
%
\author{Olav Geil\orcidID{0000-0002-9666-3399}}
\authorrunning{O.\ Geil}
\titlerunning{On Secret Sharing from Extended Norm-Trace Curves}
%
\institute{Department of Mathematical Sciences, Aalborg University, Denmark 
\email{olav@math.aau.dk}}
\maketitle              
\begin{abstract}
In~\cite{camps2025weight} Camps-Moreno et al. treated (relative) generalized Hamming weights of codes from extended norm-trace curves and they gave examples of resulting good asymmetric quantum error-correcting codes employing information on the relative distances. In the present paper we study ramp secret sharing schemes which are objects that require an analysis of higher relative weights and we show that not only do schemes defined from one-point algebraic geometric codes from extended norm-trace curves have good parameters, they also possess a second layer of security along the lines of~\cite{geil2026considerate}. It is left undecided in~\cite[page 2889]{camps2025weight} if the ``footprint-like approach'' as employed by Camps-Moreno herein is strictly better for codes related to extended norm-trace codes than the general approach for treating one-point algebraic geometric codes and their likes as presented in~\cite{geil2014relative}.  
We demonstrate that the method used in~\cite{camps2025weight} to estimate (relative) generalized Hamming weights of codes from extended norm-trace curves can be viewed as a clever application of the enhanced Goppa bound in~\cite{geil2014relative} rather than a competing approach. \\

\keywords{Extended norm-trace curves \and One-point algebraic geometric codes \and Ramp secret sharing  \and Relative generalized Hamming weights.}
\end{abstract}

\section{Introduction}\label{secintro}
Secret sharing is a branch of cryptography in which a central dealer shares a secret among a set of participants by giving them each a share. This is done in such a way that if few of them pool their shares then they obtain no information on the secret, but if many do then they can recover it in full. One often requires the schemes to be linear meaning that (partial) recovery, whenever possible, can be done by simple and fast linear algebra algorithms.  In most schemes studied in the literature the shares and the secrets are elements from the same set, often a finite field ${\mathbb{F}}_q$. This is in contrast to linear ramp secret sharing where the secret belongs to ${\mathbb{F}}_q^\ell$, and the shares to ${\mathbb{F}}_q$, with $\ell \geq 1$. 
Ramp schemes were originally introduced by Blakley and Meadows in~\cite{Blakley} and by Yamamoto in~\cite{Yamamoto}, and have gained a lot of interest due to their rich structures and their application in connection with for instance storage of bulk data and secure multiparty computation~\cite{ChenCramer,csirmaz2009ramp}. A linear ramp scheme is synonymous to a pair of nested linear codes and worst case (partial) information leakage as well as worst case (partial) information recovery can be described by relative parameters of the pair of codes and their duals. 

For large number of participants compared to the field size knowledge on the worst case scenarios, however, by far gives the complete picture, but a deeper analysis is typically very difficult. In~\cite{geil2026considerate} the author coined the concept of maximum non-$i$-qualifying sets to address this difficulty. These are sets of maximum size being not able to recover $i$ $\log_2(q)$ bits of information. It was demonstrated that two specific classes of monomial-Cartesian codes possess families of maximum non-$i$-qualifying sets with a very rich structure and based on this information one can then impose a second layer of security. By elaborating on results in~\cite{camps2025weight} in the present paper we demonstrate that also ramp secret sharing schemes constructed from nested one-point algebraic geometric codes from the extended norm-trace curves posses families of well-structured maximum non-$i$-qualifying sets giving again rise to a second layer of security. 

The exposition in~\cite{camps2025weight} uses the language of Gröbner basis theory, but as we demonstrate their strategy for deriving estimates on code parameters related to the extended norm-trace curve is not a competing method to the enhanced Goppa kind of bounds in~\cite[Eq.\ 13, Eq.\ 17]{geil2014relative} for general algebraic function fields of transcendence degree $1$, but can be viewed as a clever application thereof. By proving this we answer an open question raised by Camps-Moreno et al. in~\cite[page 2889]{camps2025weight}.

The paper is organized as follows. In Section~\ref{seclinramp} we give the necessary background on linear ramp secret sharing. We next treat general theory of relative generalized Hamming weights of nested one-point codes and their relatives in Section~\ref{secgeneral}.  Then in Section~\ref{secdecnormtrace} we elaborate on material in~\cite{camps2025weight} regarding (relative) generalized Hamming weights of nested decreasing norm-trace codes and their duals, and demonstrate the relationship between the method used in~\cite{camps2025weight} and that of~\cite{geil2014relative}.  Finally, in Section~\ref{secsecondlayer} we apply such results to produce the alluded linear ramp schemes with desirable structure and consequently a second layer of security. Section~\ref{secconclusion} contains concluding remarks.

\section{Linear Ramp Secret Sharing}\label{seclinramp}
In linear ramp secret sharing the shares given to the individual participants belong to a finite field $\mathbb{F}_q$, but the secret is a tuple in ${\mathbb{F}}_q^\ell$, where $\ell$ is allowed to be strictly larger than $1$. The linearity means that if $c^{(1)}_1, \ldots , c^{(1)}_n $ are shares for a secret $\vec{s}^{(1)}$ and $c^{(2)}_1, \ldots , c^{(2)}_n $ are shares for a secret $\vec{s}^{(2)}$ then also $ac^{(1)}_1+bc_1^{(2)}, \ldots , ac^{(1)}_n+bc_n^{(2)} $ work as shares for the secret $a\vec{s}^{(1)}+b\vec{s}^{(2)}$ for any $a,b \in {\mathbb{F}}_q$. Such schemes by~\cite[Sec.\ 4.2]{ChenCramer} can be put into the form of a nested code construction as follows. 
Let $C_2 \subseteq C_1 \subseteq {\mathbb{F}}_q^n$, $\dim C_2 = k_2 < \dim C_1 =k_1$ and $\ell = k_1-k_2 $. Consider bases $\{\vec{b}_1, \ldots , \vec{b}_{k_2}\}$ and  $\{\vec{b}_1, \ldots , \vec{b}_{k_2}, \vec{b}_{k_2+1}, \ldots , \vec{b}_{k_1}\}$, respectively, for $C_2$ and $C_1$, respectively, as vector spaces over ${\mathbb{F}}_q$. A secret $\vec{s}=(s_1, \ldots , s_{\ell})  \in {\mathbb{F}}_q^\ell$ is encoded to 
\begin{equation}
(c_1, \ldots , c_n)= r_1\vec{b}_1 + \cdots + r_{k_2} \vec{b}_{k_2}+s_1\vec{b}_{k_2+1} + \cdots + s_\ell \vec{b}_{k_1}\label{eqsnabel1}
\end{equation}
and participant $i$, then receives $c_i$ as a share $i=1, \ldots , n$.\\
The crucial parameters of a linear ramp secret sharing scheme are the number of participants, which is $n$, the dimension of space of secrets, which is $\ell$, the privacy numbers $t_1, \ldots , t_\ell$, and the reconstruction numbers $r_1, \ldots , r_\ell$. Here, for $i=1, \ldots , \ell$, $t_i$ is largest possible and $r_i$ is smallest possible such that 
\begin{itemize}
\item no set of $t_i$ participants is able to recover $i$ times $\log_2(q)$ bits of information about $\vec{s}$
\item any set of $r_i$ participants can recover $i$ times $\log_2(q)$ bits of information.
\end{itemize} 
Of special interest are full privacy and full recovery corresponding to $t=t_1$ and $r=r_\ell$. 
As is well-known~\cite{geil2014relative,kurihara}
\begin{eqnarray}
t_m&=&M_m(C_2^\perp, C_1^\perp)-1 \label{eqtm} \\
r_m&=&n-M_{\ell-m+1}(C_1,C_2)+1 \label{eqrm}
\end{eqnarray}
where 
\begin{eqnarray}
M_t(C_1,C_2)&=&\min \{ \# {\mbox{Supp}}(D) \mid D \subseteq C_1, D \cap C_2=\{\vec{0} \}, \dim D =t\} \nonumber
\end{eqnarray}
is called the $t$th relative generalized Hamming weight. 

As is always the case in coding theory, one cannot choose parameters freely. If, say, $q, n, \ell$ and $1\leq r_1 < \cdots <r_\ell \leq n$ are fixed numbers then among the set of all pairs of nested codes $C_2 \subseteq C_1 \subseteq {\mathbb{F}}_q^n$ with such parameters there are limitations as to how close $t_i$ can be to $r_i$, $i=1, \ldots , \ell$, see~\cite{cascudo2013bounds}.  However, by definition for any $i$ there exist groups of size $r_i-1$ who cannot recover $i$ times $\log_2(q)$ bits of information and when we have knowledge about the structure of such groups we may employ it as a second layer of security as demonstrated in the recent paper~\cite{geil2026considerate}. The importance of the mentioned groups justify that we give them a particular name~\cite[Def.\ 8]{geil2026considerate}.
\begin{definition}
Given a linear ramp secret sharing scheme defined from $C_2 \subseteq C_1 \subseteq {\mathbb{F}}_q^n$, a set $A \subseteq \{1, \ldots , n\}$ is called {\textit{maximum non-$i$-qualifying}} if $\# A =n-M_{\ell-i+1}(C_1,C_2)$, but from the corresponding shares one cannot recover $i \log_2(q)$ bits of information.
\end{definition}
To establish information on the access-structure (who can recover how much) and in particular information on the maximum non-$i$-qualifying sets we may employ the following result~\cite[Thm.\ 3]{geil2026considerate} which is a slight reformulation of~\cite[Thm.\ 10]{ChenCramer} (see also~\cite{patel2024efficient}).
\begin{theorem}\label{thethree} 
Let $A=\{ i_1 < \cdots < i_m\} \subseteq \{1, \ldots , n\}$. Assume $c_{i_1}^\prime , \ldots , c_{i_m}^\prime$ are simultaneously realizable shares in those positions, see~(\ref{eqsnabel1}). The amount of possible secrets $\vec{s}$ corresponding to such shares equals $q^s$ with
$$
s= \max \{\dim D \mid D \subseteq C_1, D \cap C_2 = \{\vec{0} \}, {\mbox{Supp}}(D) \subseteq \bar{A} \}
$$
where $\bar{A}=\{1, \ldots , n\} \backslash A$.
\end{theorem}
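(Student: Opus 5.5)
The plan is to reduce the count of compatible secrets to a dimension computation for a linear map. Write $G$ for the $k_1\times n$ matrix whose rows are $\vec{b}_1,\ldots,\vec{b}_{k_1}$, so that encoding~(\ref{eqsnabel1}) is $(\vec{r},\vec{s})\mapsto (\vec{r},\vec{s})G$. Let $\pi_A:{\mathbb{F}}_q^n\to{\mathbb{F}}_q^{m}$ be the projection onto the coordinates in $A$.

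\textbf{Step 1: reduce to $\vec{0}$ shares.} Fix some $(\vec{r}^{\,\prime},\vec{s}^{\,\prime})$ realizing $c_{i_1}',\ldots,c_{i_m}'$. Then $(\vec{r},\vec{s})$ realizes the same shares if and only if $(\vec{r}-\vec{r}^{\,\prime},\vec{s}-\vec{s}^{\,\prime})G$ lies in $\ker\pi_A$. The set of compatible secrets is therefore a coset $\vec{s}^{\,\prime}+S$. Here
\[
S=\{\vec{s}\mid \exists \vec{r}:\ (\vec{r},\vec{s})G\in\ker\pi_A\}.
\]
So the number of possible secrets is $q^{\dim S}$, independent of the realizable share values, and we only need $\dim S=s$.

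\textbf{Step 2: describe $S$ via codewords supported on $\bar{A}$.} Let $W=\{\vec{c}\in C_1\mid {\mbox{Supp}}(\vec{c})\subseteq\bar{A}\}$, a subspace of $C_1$. The map $\phi:C_1\to{\mathbb{F}}_q^\ell$ sending $(\vec{r},\vec{s})G$ to $\vec{s}$ is well defined, since $G$ has full rank. It is linear with kernel $C_2$. By definition $S=\phi(W)$, hence
\[
\dim S=\dim W-\dim(W\cap C_2),
\]
which is the dimension of $W/(W\cap C_2)$.

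\textbf{Step 3: match with the maximum in the statement.} First, any $D\subseteq C_1$ with $D\cap C_2=\{\vec{0}\}$ and ${\mbox{Supp}}(D)\subseteq\bar{A}$ satisfies $D\subseteq W$ and $D\cap(W\cap C_2)=\{\vec{0}\}$. Hence $\dim D\le \dim W-\dim(W\cap C_2)$. Conversely, choosing $D$ to be a complement of $W\cap C_2$ inside $W$ attains this value. Therefore $s=\dim S$, as claimed.

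The main point needing care is Step 1, namely that the count does not depend on which realizable shares are fixed. This is handled by the coset argument above. The remaining steps are routine linear algebra.
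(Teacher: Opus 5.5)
Your proof is correct and complete. The paper does not prove this statement itself. It quotes it from \cite[Thm.\ 3]{geil2026considerate}, described as a slight reformulation of \cite[Thm.\ 10]{ChenCramer}, so there is no in-paper argument to compare your route against.

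What you supply is a self-contained linear-algebra derivation. The coset argument in Step 1 explicitly proves something the statement only implicitly asserts: the number of compatible secrets does not depend on which realizable shares are observed. Steps 2--3 then identify $s$ with $\dim W-\dim(W\cap C_2)$, where $W=C_1\cap\ker\pi_A$. This is equivalent to $s=\ell-(\dim\pi_A(C_1)-\dim\pi_A(C_2))$, the punctured-code form in which such access-structure results are often stated. Your computation therefore makes the translation between that form and the ``maximal $D$'' form used in the paper explicit. The citation route is shorter but leaves that reconciliation to the reader.

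The only cosmetic point is notation: you use $G$ for the generator matrix, which clashes with the divisor $G=P_1+\cdots+P_n$ used later in the paper.
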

In~\cite{geil2026considerate} Theorem~\ref{thethree} was employed to show that ramp schemes coming from the particular nested monomial-Cartesian codes as described in~\cite[Sec.\ 4]{MR3782266} does not only have good parameters $n, \ell, r_1, \ldots , r_\ell, r_1, \ldots , t_\ell$, but also support a second layer of security by possessing maximum non-$i$-qualifying sets of desirable well-structured form. This insight was then used in~\cite[Sec.\ 4]{geil2026considerate} to construct monomial-Cartesian code based schemes with larger $\ell$ maintaining desirable second layer of security by possessing well-structured maximum non-$i$-qualifying sets whilst paying less interest in the worst case security $t_1, \ldots , t_\ell$. The aim of the present contribution is to investigate what can be said regarding a second layer of security for ramp schemes based on decreasing norm-trace codes.

\section{Relative Parameters of One-Point Algebraic Geometric Codes and Their Relatives}\label{secgeneral}
Before treating codes from extended norm-trace curves we revisit the general theory of relative generalized Hamming weights of any pair of nested one-point algebraic geometric codes and their relatives~\cite{geil2014relative}. 
Consider an arbitrary algebraic function field of transcendence degree 1 defined over some finite field ${\mathbb{F}}_{\mathcal{Q}}$. Let $P_1, \ldots , P_n, Q$ be pairwise different rational places. Let $H(Q)$ be the Weierstrass semigroup of $Q$, i.e.\ 
$$H(Q)=-\nu_Q( R=\cup_{m=0}^\infty {\mathcal{L}}(mQ))$$
where $\nu_Q$ is the discrete valuation corresponding to $Q$. Define
$$H^\ast(Q)=\{ \lambda \in H(Q) \mid C_{\mathcal{L}}(G=P_1+\cdots +P_n,\lambda Q) \neq C_{\mathcal{L}}(G,(\lambda -1) Q) \}.$$
Clearly, $\# H^\ast (Q)=n$. 
We now recall material from~\cite{geil2014relative} on how to estimate (relative) generalized Hamming weights. Let $D \subseteq {\mathbb{F}}_{\mathcal{Q}}^n$ be any subspace, of dimension say $m$.  There exist functions $f_1, \ldots , f_m \in R$ such that 
$$\{ (f_i(P_1), \ldots , f_i(P_n)) \mid i=1, \ldots , m\}$$
is a basis for $D$ and such that 
\begin{equation}
-\nu_Q(f_1) < \cdots < -\nu_Q(f_m) \label{eqnuQ}
\end{equation} 
holds, where without loss of generality we may assume that all numbers in~(\ref{eqnuQ}) belong to $H^\ast(Q)$. 
These values do not depend on the choice of the $f_i$'s, but are invariants of $D$ and we may therefore define $\rho(D)=\{ -\nu_Q(f_1), \ldots , -\nu_Q(f_m)\}\subseteq H^\ast(Q)$. From~\cite[Prop.\ 17]{geil2014relative} we have the bound 
\begin{eqnarray}
\# {\mbox{Supp}} (D)&\geq & \# \big( H^\ast (Q) \cap \big( \cup_{s=1}^m (\gamma_s +H(Q))\big) \big) \label{eqestimateD}
\end{eqnarray}
where $\rho(D) =\{ \gamma_1, \ldots , \gamma_m\} $. 

With proper care one can apply~(\ref{eqestimateD}) to any set of nested codes defined from $R$, but for nested one-point codes the situation is immediate~\cite[Thm.\ 19]{geil2014relative}. Let $\lambda_2< \lambda_1$ be elements in $H^\ast(Q)$ and consider $m \leq \dim C_{\mathcal{L}}(G,\lambda_1 Q) - \dim C_{\mathcal{L}}(G,\lambda_2 Q)$. We have 
\begin{eqnarray}
&&M_m( C_{\mathcal{L}}(G,\lambda_1Q),C_{\mathcal{L}}(G,\lambda_2Q))\nonumber \\
&\geq &\min \bigg\{ \# \bigg(H^\ast (Q) \cap \big( \cup_{s=1}^m (\gamma_s+H(Q))\big) \bigg)\mid \gamma_s \in H^\ast (Q) {\mbox{ for }} s=1, \ldots , m,\nonumber \\
&& {\mbox{ \ \hspace{6cm} \ }} \lambda_2 < \gamma_1 < \cdots <\gamma_m \leq \lambda_1\bigg\}.\label{eqsnabel47}
\end{eqnarray}
That this bound (and in larger generality (\ref{eqestimateD})) can be seen as an enhancement of the Goppa bound for a one-point Goppa code i.e.\ the bound
$$
d(C_{\mathcal{L}}(G,\lambda Q)) \geq n-\lambda 
$$
follows from ~\cite[Lem.\ 5.15]{handbook} which says that for any numerical semigroup $\Lambda$ and any element $\lambda$ herein we have $\lambda=\# (\Lambda \backslash (\lambda + \Lambda))$. In fact~(\ref{eqsnabel47}) can be a strict improvement even when applied to estimate the minimum distance of a one-point code. This happens when some of the elements in $\Lambda\backslash (\lambda + \Lambda)$ are not contained in $H^\ast (Q)$.  

As noted in~\cite{geil2014relative} a translation of~\cite[Thm\ 14]{geil2014relative} into the particular case of one-point algebraic geometric codes and their relatives produces a similar result as~(\ref{eqestimateD}), but for dual spaces. We now fill in the missing details by stating such result. Consider a subspace $D \subseteq {\mathbb{F}}_{\mathcal{Q}}^n$ of dimension $m$ and let $\eta_1, \ldots , \eta_m$ be the unique numbers such that for $i=1, \ldots , m$ there exists some $\vec{c}\in D$ satisfying $\vec{c} \in C_{\mathcal{L}}^\perp (G,(\eta_i-1)Q)$, but $\vec{c} \notin C_{\mathcal{L}}^\perp (G, \eta_i Q)$. We shall write $\kappa (D)=\{ \eta_1, \ldots , \eta_m\} \subseteq H^\ast(Q)$. The counterpart to~(\ref{eqestimateD}) is 
\begin{eqnarray}
\# {\mbox{Supp}} (D)&\geq & \# \big( H(Q) \cap \big( \cup_{s=1}^m (\eta_s -H(Q))\big) \big) \label{eqestimateDdual}
\end{eqnarray}
From this one immediately obtains \cite[Thm.\ 20]{geil2014relative} which we now state
\begin{eqnarray}
&&M_m( C^\perp_{\mathcal{L}}(G,\lambda_2Q),C^\perp_{\mathcal{L}}(G,\lambda_1Q))\nonumber \\
&\geq &\min \bigg\{ \# \bigg(H(Q) \cap \big( \cup_{s=1}^m (\gamma_s-H(Q))\big) \bigg)\mid \gamma_s \in H^\ast (Q) {\mbox{ for }} s=1, \ldots , m,\nonumber \\
&& {\mbox{ \ \hspace{6cm} \ }} \lambda_2 < \gamma_1 < \cdots <\gamma_m \leq \lambda_1\bigg\}.\label{eqsnabel47pkt5}
\end{eqnarray}
(Note that in the above formula $\gamma_1, \ldots , \gamma_m$  play the role of $\eta_1, \ldots , \eta_m$). 
\section{Bounds on Relative Parameters of Codes from the Extended Norm-Trace Curve}\label{secdecnormtrace}

The extended norm-trace curve is the curve over ${\mathbb{F}}_{q^s}$ given by the equation
\begin{equation}
x^u =y^{q^{s-1}}+y^{q^{s-2}}+\cdots +y, \label{eqnt}
\end{equation}
where $u$ is a positive divisor of $\frac{q^s-1}{q-1}$. Recall, that the right hand side of~(\ref{eqnt}) corresponds to the trace map from ${\mathbb{F}}_{q^s}$ to ${\mathbb{F}}_q$, and similarly that $x^{\frac{q^s-1}{q-1}}$ corresponds to the norm map. Related codes are therefore generalizations of norm-trace codes~\cite{normtrace,matthews2022norm} which again are generalizations of Hermitian codes~\cite{MR1737936,stichtenothhermitian,yanghermitian}. Codes from the extended norm-trace curve have been extensively studied in~\cite{bras2008duality,camps2025weight,carvalho2024decreasing,janwa2016parameters} and one of the crucial observations employed here is the systematic structure of the affine variety, i.e. the set of affine roots of~(\ref{eqnt}).  As is easily seen from the property of the norm-map and the trace-map the affine point set is the disjoint union of the following sets~\cite[Lem.\ 3.1]{carvalho2024decreasing},
$$
A_0=\{(0,b) \mid b^{q^{s-1}}+\cdots +b=0\} =\Gamma_0^{(1)} \times \Gamma_0^{(2)}
$$
and for $i=1, \ldots , q-1$
$$A_i=\{(a,b) \mid a^u=\alpha^i, b^{q^{s-1}}+\cdots +b=\alpha^i \} =\Gamma_i^{(1)} \times \Gamma_i^{(2)},$$
where $\alpha$ is a primitive element of ${\mathbb{F}}_q$. Clearly, $\# A_0=\# \Gamma_0^{(2)}=q^{s-1}$ and for $i=1, \ldots , q-1$ we have $\#A_i = u q^{s-1}$ where $\# \Gamma_i^{(1)}=u$ and $\# \Gamma_i^{(2)}=q^{s-1}$, \cite[Lem.\ 3.1]{carvalho2024decreasing}.
The algebraic function field of transcendence degree $1$ over ${\mathbb{F}}_{q^s}$ defined from~(\ref{eqnt}) has exactly $1+\sum_{i=0}^{q-1} \# A_i$ rational places \cite[Sec.\ 4.2]{carvalho2024decreasing}, each of them, but one, being related to an affine point, and the last being the unique place at infinity. Following \cite[Sec.\ 4.2]{carvalho2024decreasing} and \cite{pellikaanexists} the Weierstrass semigroup corresponding to the unique place $Q$ at infinity equals $$H(Q)=\langle u, q^{s-1} \rangle$$ and we have $$R=\cup_{m=0}^\infty {\mathcal{L}} (mQ)={\mathbb{F}}_{q^s}[X,Y]/I$$ where $I=\langle X^u-Y^{q^{s-1}}-\cdots -Y\rangle$.
Writing $$x^iy^j=X^iY^j+\langle X^u-Y^{q^{s-1}}-\cdots -Y\rangle$$ and following~\cite[Sec.\ 4.2]{carvalho2024decreasing} one sees that $$\{x^iy^j \mid 0 \leq i , 0 \leq j < q^{s-1} \}$$ is a basis for $R$ as a vector space over ${\mathbb{F}}_{q^s}$ and clearly $-\nu_Q(x^iy^j)=i q^{s-1}+ju$, where $\nu_Q$ indicates the discrete valuation corresponding to $Q$. No two elements of this basis have the same $\nu_Q$-value, i.e.\ there is a one-to-one correspondence between the basis and $H(Q)$. 

We shall denote by $P_1, \ldots , P_n$ the rational places different from $Q$, where of course $n=(u(q-1)+1)q^{s-1}$. Writing $G=P_1+\cdots +P_n$ and as before 
$$H^\ast (Q)=\{ \lambda \in H(Q) \mid C_{\mathcal{L}}(G,\lambda Q) \neq C_{\mathcal{L}}(G,(\lambda -1)Q)\},$$
it is not difficult to see that $$H^\ast (Q)=\{ i q^{s-1}+j u \mid 0\leq i < u(q-1)+1, 0 \leq j <q^{s-1}\}$$ and that $$\{(f(P_1), \ldots , f(P_n)) \mid f=x^iy^j, {\mbox{ where }} 0 \leq i < u(q-1)+1, 0 \leq j < q^{s-1}\}$$ is a basis for ${\mathbb{F}}_{q^s}^n$ as a vector space over ${\mathbb{F}}_{q^s}$, and in particular that 
\begin{eqnarray}
\{(f(P_1), \ldots , f(P_n)) \mid  f=x^iy^j, {\mbox{ where }} 0 \leq i < u(q-1)+1, 0 \leq j < q^{s-1}, {\mbox{ \ \ \ \ }} \nonumber \\
 iq^{s-1}+ju\leq \lambda\} \nonumber 
\end{eqnarray} is a basis for $C_{\mathcal{L}}(G,\lambda Q)$. In~\cite{carvalho2024decreasing} similar results were proved using Gröbner basis theoretical arguments. 

We next turn our attention to the problem of estimating relative generalized Hamming weights. Let $D \subseteq {\mathbb{F}}_{q^s}^n$ be any subspace, of dimension say $m$ and consider $\rho(D)=\{\gamma_1, \ldots , \gamma_m\} \subseteq H^\ast (Q)$ as in Section~\ref{secgeneral}. We introduce the function $\iota : H^\ast(Q) \rightarrow \{(i,j) \mid 0 \leq i < u(q-1)+1 {\mbox{ and }} 0 \leq j \leq q^{s-1}-1\}$ given by $\iota (\lambda=i q^{s-1}+j u)=(i,j)$.
In~\cite{camps2025weight} $\# {\mbox{Supp}} (D)$ was estimated for the considered curve using a Gröbner basis approach. We explain their result and demonstrate the relationship with (\ref{eqestimateD}). Let $(a,b) \in \iota (\{\gamma_1, \ldots , \gamma_m\})=\{(i_1,j_1), \ldots , (i_m,j_m)\}$ be chosen such that $a=\min \{ i_1, \ldots , i_m\}$, then the bound in~\cite[Eq.\ (3)+ Eq.\ (4)]{camps2025weight} reads
\begin{eqnarray}
\# {\mbox{Supp}} (D) &\geq &\# \{ (i,j) \mid 0 \leq i <u(q-1)+1, 0 \leq j < q^{s-1}, (i_s,j_s) \leq_p (i,j) \nonumber \\
&& {\mbox{ \ \ \ \ \ \ \ \ \ \ \ \ \ for some }} s \in \{1, \ldots , m\} {\mbox{ or }} (a+u,0) \leq_p  (i,j)\}.\label{eqvtbound}
\end{eqnarray}
Here, we used the partial ordering $\leq_p$ given by $(\alpha, \beta) \leq_p (\epsilon, \delta)$ if and only if $\alpha \leq \epsilon$ and $\beta \leq \delta$. Note, that the condition $ (a+u,0) \leq_p  (i,j)$ of course only comes into action if $a+u <u(q-1)+1$. To see that (\ref{eqvtbound}) is a consequence of~(\ref{eqestimateD}) we only need to show that $(a+u)q^{s-1}+0 u$ belongs to $(aq^{s-1},bu)+H(Q)$ whenever $0 < b$. But, $(a q^{s-1}+bu)+(0 q^{s-1}+(q^{s-1}-b)u)=(a+u)q^{s-1}+0 u$ and we are through. 

Combining~(\ref{eqvtbound}) and (\ref{eqsnabel47}) one obtains
\begin{eqnarray}
&&M_m( C_{\mathcal{L}}(G,\lambda_1Q),C_{\mathcal{L}}(G,\lambda_2Q))\nonumber \\
&\geq &\min \{ \# \{ (i,j) \mid 0 \leq i <u(q-1)+1, 0 \leq j < q^{s-1}, \iota(\gamma_s) \leq_p (i,j)   \nonumber \\
&& {\mbox{ \ \ \ \ \ \ \ \ \ \ \ \ \ \ \ \ \ \ \ \ \ for some }} s \in \{1, \ldots , m\} {\mbox{ or }} (a+u,0) \leq_p  (i,j)\}, \nonumber \\
&&{\mbox{ \ \ \ \ \ \ \ \ \ \ \ \ \ \ \ \ \ \ \ \ \ \ \ \ \ for }} s=1, \ldots , m, {\mbox{ \ }}\lambda_2 < \gamma_1 < \cdots <\gamma_m \leq \lambda_1\}.\label{eqsnabel49}
\end{eqnarray}

We continue the study of relative generalized Hamming weights but now turn to dual codes. Given a subspace $D\subseteq {\mathbb{F}}_{q^s}^n$ of dimension $m$ let $\kappa(D)=\{\eta_1,\ldots , \eta_m\}$ be as in Section~\ref{secgeneral}. Choose 
$$(a,b) \in \iota (\{\eta_1, \ldots , \eta_m\})=\{(i_1,j_1), \ldots , (i_m,j_m)\}$$ with $a$ maximal. We have
\begin{eqnarray}
\# {\mbox{Supp}} (D) &\geq &\# \{ (i,j) \mid 0 \leq i <u(q-1)+1, 0 \leq j < q^{s-1}, (i,j) \leq_p (i_s,j_s) \nonumber \\
&& {\mbox{ \ \ \ for some }} s \in \{1, \ldots , m\} {\mbox{ or }} (i,j) \leq_p  (a-u,q^{s-1}-1)\}.\label{eqaaubound}
\end{eqnarray}
We prove this by applying~(\ref{eqestimateDdual}) and by using similar arguments as above. We only need to demonstrate that $(a-u)q^{s-1}+(q^s-1)u \in (a q^{s-1}+b u) -H(Q)$ for $a\geq u$. We have $(a q^{s-1}+ bu)-(b+1)u=(a-u)q^{s-1}+(q^{s-1}-1)u$ and we are through. 

Combining~(\ref{eqaaubound}) and (\ref{eqsnabel47pkt5}) one obtains
\begin{eqnarray}
&&M_m( C^\perp_{\mathcal{L}}(G,\lambda_2Q),C^\perp_{\mathcal{L}}(G,\lambda_1Q))\nonumber \\
&\geq &\min \{ \# \{ (i,j) \mid 0 \leq i <u(q-1)+1, 0 \leq j < q^{s-1}, (i,j) \leq_p \iota (\gamma_s) \nonumber \\
&& {\mbox{ \ \ \ \ \  \ \ \ \ \ \ \ \ \ \  for some }} s \in \{1, \ldots , m\} {\mbox{ or }} (i,j) \leq_p (a-u,q^s-1)\}, \nonumber \\
&&{\mbox{ \ \ \ \ \  \ \ \ \ \ \ \ \ \ \ \ \ \ \ \ \ \ \ \ \ \ \ for }} s=1, \ldots , m, \lambda_2 < \gamma_1 < \cdots <\gamma_m \leq \lambda_1\}.\label{eqsnabel49pkt5}
\end{eqnarray}

One of the important insights from~\cite{camps2025weight} is that for so-called decreasing norm-trace codes $C_2 \subseteq C_1$ the estimate on $M_m(C_1,C_2)$ inferred from~(\ref{eqvtbound}) becomes sharp. A code of dimension $k$ is called a decreasing norm-trace code if it equals the span of functions $g_i$, $i=1, \ldots , k$ evaluated at $P_1, \ldots , P_n$ where we have $g_i =x^{\alpha_i} y^{\beta_i}$  with $0 \leq \alpha_i < (q-1)u+1, 0 \leq \beta_i < q^{s-1}$, and where for any $0 \leq \alpha \leq \alpha_i$ and $0 \leq \beta \leq \beta_i$ there exists a $j \in \{1, \ldots , k\}$ such that $g_j=x^\alpha y^\beta$. Hence, the one-point algebraic geometric codes related to the extended norm-trace curve are of this type. As a consequence of the sharpness in case of decreasing codes we can conclude that in the case of the extended norm-trace curve the right hand side of~(\ref{eqestimateD}) and the right hand side (\ref{eqvtbound}) are identical (this could alternatively have been proved directly by using~\cite[Lem.\ 5.15]{handbook}). 

In~\cite[Thm.\ 5.3]{camps2025weight} the dual of decreasing norm-trace code is shown to be equivalent to another decreasing norm-trace code. We leave it for the reader to inspect that the estimate one obtains by applying such correspondence to the relative generalized Hamming weights of a pair of duals of nested decreasing norm-trace codes is in fact the same as one gets by applying~(\ref{eqaaubound}) directly. From the above we conclude that both (\ref{eqsnabel49}) and (\ref{eqsnabel49pkt5}) are sharp. 

 \section{Schemes with a Second Layer of Security}\label{secsecondlayer}
 As mentioned in the previous section it is proved in~\cite{camps2025weight} that all their bounds are sharp in that for any $ \gamma_1 < \cdots < \gamma_m$ in $H^\ast(Q)$ there exists a corresponding space $D$ of dimension $m$ satisfying $\rho(D)=\{\gamma_1, \ldots , \gamma_m\}$ and with equality in~(\ref{eqvtbound}). Moreover, among such spaces there exist some which can be written 
 $$D= {\mbox{Span}}_{{\mathbb{F}_{q^s}}}\{ (f_1(P_1), \ldots , f_1(P_n)), \ldots , (f_m(P_1), \ldots , f_m(P_n))\}$$
 with  $-\nu_Q(f_i)=\gamma_i$, $i=1, \ldots , m$ 
  and each element $f_i$ being a product of linear factors. Our treatment of ramp secret sharing shall rely heavily on such observations.

With the aim of establishing  ramp secret sharing schemes with maximum non-$i$-qualifying sets allowing for a second layer of security we now revisit some of the results in~\cite{camps2025weight}. Here, by the first level of security we refer to the parameters $t_1, \ldots , t_\ell$ and $r_1, \ldots , r_\ell$, and the second layer means that we have systematic families of sets of maximal size of participants that cannot all be ignored when $i$ $\log_2(q)$ bits of information $i \in \{1, \ldots , \ell \}$ are to be retrieved. The following theorem is a combination of~\cite[Lem.\ 3.1]{camps2025weight} and Case 1.1 in the proof of~\cite[Thm.\ 3.2]{camps2025weight} adapted to our language.  Well-structured maximum non-$i$-qualifying sets derived from the theorem are discussed in the subsequent remark, corollary and examples.
 \begin{theorem}\label{propsomething}
 Consider pairwise different $\gamma_1, \ldots , \gamma_w \in H^\ast (Q)$ and write
 $$\iota ( \{ \gamma_1, \ldots , \gamma_w\})=\{(a_1,b_1), \ldots , (a_w,b_w)\}.$$
 Assume  $a_1<\cdots <a_w$ and $b_w < \cdots <b_1$ and that $a_w-a_1 < u$. Assume $a_1\leq u(q-2)+1$ Then the right hand side of~(\ref{eqvtbound}) reads
 \begin{equation}
 n-\big( a_1 q^{s-1}+b_w u+\sum_{i=1}^{w-1}(a_{i+1}-a_i)(b_i-b_w) \big).
 \label{eqbrugesmaaske}
 \end{equation}
 The following functions $f_1, \ldots , f_w \in R$ satisfy $-\nu_Q(f_j)=a_j q^{s-1}+b_j u$, $j=1, \ldots , w$
   and for 
 $$
 D={\mbox{Span}}_{{\mathbb{F}}_{q^s}}\{(f_1(P_1),\ldots ,f_1(P_n)), \ldots , (f_w(P_1), \ldots , f_w(P_n))\}
 $$
 equality holds in~(\ref{eqvtbound}) meaning that  $\# {\mbox{Supp}}(D)$ is equal to the right hand side of~(\ref{eqbrugesmaaske}). Let $i^\prime \in \{1, \ldots , q-1\}$ and choose 
 \begin{equation}
 \alpha_1, \ldots , \alpha_{a_1} \in \Gamma_0^{(1)} \times \Gamma_1^{(1)} \times \cdots \times  \Gamma_{i^\prime -1}^{(1)} \times \Gamma_{i^\prime + 1}^{(1)} \times \cdots \times \Gamma_{q-1}^{(1)}. \label{eqnuhenvisetil}
 \end{equation}
 Enumerate $\Gamma_{i^\prime}^{(1)}=\{\alpha_1^\prime, \ldots , \alpha_u^\prime\}$ and $\Gamma_{i^\prime}^{(2)}=\{\beta_1, \ldots , \beta_{q^{s-1}}\}$. 
 Finally for $j=1, \ldots , w$ define
 \begin{eqnarray}
 f_j=\prod_{i=1}^{a_1}(x-\alpha_i)\prod_{i=1}^{a_j-a_1}(x-\alpha_i^\prime)\prod_{i=1}^{b_j}(y-\beta_i). \label{eqf}
 \end{eqnarray}
 \end{theorem}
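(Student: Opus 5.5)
The plan has three parts: evaluate the right hand side of~(\ref{eqvtbound}) as an explicit count, compute $-\nu_Q(f_j)$, and bound $\#\,\mathrm{Supp}(D)$ from above by counting common zeros of the $f_j$. Combined with the lower bound~(\ref{eqvtbound}), this upper bound forces equality.

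\emph{The count.} I count the complement, inside the grid $\{0,\ldots,u(q-1)\}\times\{0,\ldots,q^{s-1}-1\}$, of the up-set generated by $(a_1,b_1),\ldots,(a_w,b_w)$ and $(a_1+u,0)$. Here $a=a_1$, since the $a_j$ increase. The hypothesis $a_1\le u(q-2)+1$ means $a_1+u\le u(q-1)+1$, so the extra generator really cuts off all columns $i\ge a_1+u$ inside the grid. Since $a_w<a_1+u$ and the $b_j$ decrease, the staircase is easy to read off column by column:
\begin{itemize}
\item a column $i<a_1$ contributes all $q^{s-1}$ elements;
\item a column with $a_k\le i<a_{k+1}$ contributes the $b_k$ elements with $j<b_k$;
\item a column with $a_w\le i<a_1+u$ contributes $b_w$ elements;
\item columns $i\ge a_1+u$ contribute nothing.
\end{itemize}
The complement therefore has size
\[
a_1q^{s-1}+\sum_{k=1}^{w-1}(a_{k+1}-a_k)b_k+(a_1+u-a_w)b_w .
\]
Using $\sum_{k=1}^{w-1}(a_{k+1}-a_k)=a_w-a_1$, this rewrites as $a_1q^{s-1}+ub_w+\sum_{k=1}^{w-1}(a_{k+1}-a_k)(b_k-b_w)$. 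Subtracting from $n=\#H^\ast(Q)$ gives~(\ref{eqbrugesmaaske}).

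\emph{Valuations and well-definedness.} The $x$-coordinates of $A_0\cup\cdots\cup A_{q-1}$ outside $\Gamma^{(1)}_{i'}$ are $1+u(q-2)$ distinct elements. Hence the hypothesis $a_1\le u(q-2)+1$ allows choosing $\alpha_1,\ldots,\alpha_{a_1}$ pairwise distinct. I read~(\ref{eqnuhenvisetil}) as a choice of distinct elements from the union $\Gamma^{(1)}_0\cup\cdots\cup\Gamma^{(1)}_{q-1}$ with $\Gamma^{(1)}_{i'}$ removed. Also $a_j-a_1<u=\#\Gamma^{(1)}_{i'}$ and $b_j<q^{s-1}=\#\Gamma^{(2)}_{i'}$, so the remaining factors in~(\ref{eqf}) make sense.

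Since $-\nu_Q(x)=q^{s-1}$ and $-\nu_Q(y)=u$, and $-\nu_Q$ is additive on products, we get $-\nu_Q(f_j)=a_jq^{s-1}+b_ju=\gamma_j$. These values are pairwise different elements of $H^\ast(Q)$. Therefore the evaluation vectors are linearly independent, $\dim D=w$, and $\rho(D)=\{\gamma_1,\ldots,\gamma_w\}$. By~(\ref{eqvtbound}), $\#\,\mathrm{Supp}(D)$ is at least~(\ref{eqbrugesmaaske}).

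\emph{Common zeros (the main step).} I show that the number of points $P_l$ where all $f_j$ vanish is at least $a_1q^{s-1}+ub_w+\sum_k(a_{k+1}-a_k)(b_k-b_w)$; this gives the reverse inequality. The points are counted in three groups.
\begin{itemize}
\item Every affine point with $x=\alpha_i$, $i\le a_1$, is a common zero. Each such $x$-value has exactly $q^{s-1}$ points over it (it lies in some $\Gamma^{(1)}_k$ and pairs with all of $\Gamma^{(2)}_k$). This gives $a_1q^{s-1}$ zeros.
\item Consider $x=\alpha'_t$ with $a_k-a_1<t\le a_{k+1}-a_1$. Here $f_j$ vanishes through its $x$-factors exactly when $j\ge k+1$. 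For $j\le k$ it vanishes iff $y\in\{\beta_1,\ldots,\beta_{b_j}\}$, and the intersection over $j\le k$ is $\{\beta_1,\ldots,\beta_{b_k}\}$ since $b_k\le b_j$. So there are $b_k$ common zeros over each such $\alpha'_t$. All these points lie in $A_{i'}=\Gamma^{(1)}_{i'}\times\Gamma^{(2)}_{i'}$, so they are genuine points.
\item For $a_w-a_1<t\le u$ the same argument gives $b_w$ common zeros.
\end{itemize}
Summing reproduces exactly the count above, so $\#\,\mathrm{Supp}(D)$ is at most~(\ref{eqbrugesmaaske}), and equality holds.

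The only delicate point is this zero count: one has to make sure the chosen $\alpha_i$ are distinct and avoid $\Gamma^{(1)}_{i'}$, so that the three groups of points are disjoint and nothing is double counted. This is exactly where the two hypotheses $a_1\le u(q-2)+1$ and $a_w-a_1<u$ enter.
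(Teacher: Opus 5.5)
Your proof is correct. It is essentially the argument the paper delegates to \cite{camps2025weight} (Lem.~3.1 and Case~1.1 of Thm.~3.2): compute the staircase count, then show the products of linear factors in~(\ref{eqf}) have that many common zeros, which forces equality. The differences are cosmetic: you take the lower bound from~(\ref{eqvtbound}) as derived here from the enhanced Goppa bound~(\ref{eqestimateD}) rather than from the Gr\"obner footprint, and your reading of~(\ref{eqnuhenvisetil}) as pairwise distinct elements of $\bigcup_{k\neq i'}\Gamma^{(1)}_k$ is the intended one.
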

 \begin{proof}
See~\cite[Lem.\ 3.1]{camps2025weight} and the first part of the proof of~\cite[Thm.\ 3.2]{camps2025weight}. 
 \end{proof}

To make Theorem~\ref{propsomething} operational in connection with deriving second layer of security we shall need the following lemma which is new. 
 \begin{lemma}
Consider $\gamma_1 < \cdots < \gamma_w$ in $H^\ast(Q)$. Assume $\gamma_w-\gamma_1 < \min \{u,q^{s-1}\}$ and write 
\begin{equation}
\{\iota (\gamma_1), \ldots , \iota(\gamma_w)\}=\{ (a_1,b_1),\dots , (a_w,b_w)\}. \label{eqmercedesa}
\end{equation}
The enumeration on the right hand side of (\ref{eqmercedesa}) can be done in such a way that
\begin{equation}
a_1 < \cdots <a_w  {\mbox{ and }} b_w < \cdots <b_1. \label{eqmedogsaa} 
\end{equation}
 It holds that $a_w-a_1 < u$. Under the additional condition $\gamma_1 \leq (u (q-2)+w)q^{s-1}$ it holds that $a_1 \leq u(q-2)+1$.
 \end{lemma}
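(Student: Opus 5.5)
The plan is to work directly with the representation $\gamma = a q^{s-1} + b u$, $(a,b)=\iota(\gamma)$, $0\le b<q^{s-1}$. Take two of the elements, $\gamma_k<\gamma_l$, with $\iota(\gamma_k)=(a_k,b_k)$ and $\iota(\gamma_l)=(a_l,b_l)$. The hypothesis gives
\[
0<\gamma_l-\gamma_k=(a_l-a_k)q^{s-1}+(b_l-b_k)u<\min\{u,q^{s-1}\}.
\]
I will show that $a_l\neq a_k$ and $b_l\neq b_k$.

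If $a_l=a_k$, then $\gamma_l-\gamma_k=(b_l-b_k)u$ is a positive multiple of $u$, hence at least $u$, which is a contradiction. If $b_l=b_k$, then $\gamma_l-\gamma_k=(a_l-a_k)q^{s-1}$ is a positive multiple of $q^{s-1}$, again a contradiction. Hence both $a_l-a_k$ and $b_l-b_k$ are nonzero. Since the difference is positive, they cannot both be negative. If both were positive, the difference would be at least $q^{s-1}+u$, which is too large. So exactly one of them is positive.

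This shows that, among any two of the $\gamma$'s, the $a$-coordinate and the $b$-coordinate are ordered oppositely. Sorting the pairs by $a$ therefore gives $a_1<\cdots<a_w$ and simultaneously $b_w<\cdots<b_1$, which is (\ref{eqmedogsaa}). With this enumeration the labels need not follow the order of the $\gamma$'s, so I will state clearly that the relabeling is by $a$.

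For $a_w-a_1<u$, I would compare the elements with smallest and largest $a$:
\[
|\gamma_{(w)}-\gamma_{(1)}|=\bigl|(a_w-a_1)q^{s-1}-(b_1-b_w)u\bigr|<q^{s-1}.
\]
Suppose $a_w-a_1\ge u$. Then $(a_w-a_1)q^{s-1}\ge uq^{s-1}$, while $(b_1-b_w)u\le (q^{s-1}-1)u<uq^{s-1}$. The difference would then be at least $uq^{s-1}-(q^{s-1}-1)u=u$, which contradicts the hypothesis that it is less than $u$. Hence $a_w-a_1<u$.

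For the last claim, the smallest $\gamma$ in value need not have the smallest $a$. I will still lower-bound it in terms of $a_1$, the minimum of the $a$'s. All $w$ values $a_1<\cdots<a_w$ are distinct, so $a_w\ge a_1+w-1$. The element with $a$-coordinate $a_w$ has value at least $a_wq^{s-1}$. Since every $\gamma$ is less than $\gamma_1+\min\{u,q^{s-1}\}\le \gamma_1+q^{s-1}$, this gives
\[
(a_1+w-1)q^{s-1}\le a_wq^{s-1}<\gamma_1+q^{s-1}\le (u(q-2)+w+1)q^{s-1}.
\]
Hence $a_1<u(q-2)+2$, that is, $a_1\le u(q-2)+1$, where $\gamma_1$ here denotes the smallest value.

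I expect the main obstacle to be this last step. One has to be careful that in (\ref{eqmercedesa}) the indices refer to the $a$-ordering, while the hypothesis $\gamma_1\le(u(q-2)+w)q^{s-1}$ refers to the smallest value. Translating between the two orderings is done by the bound $a_w\ge a_1+w-1$ together with $\gamma<\gamma_1+q^{s-1}$.
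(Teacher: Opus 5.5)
Your proof is correct and follows the paper's own argument. For $a_w-a_1<u$ you use the same estimate $uq^{s-1}-(q^{s-1}-1)u=u$. For the last claim you use the same observation that every $a$-coordinate is at most $u(q-2)+w$, combined with the strict increase $a_1<\cdots<a_w$. You also fill in the pairwise argument showing that the $a$- and $b$-coordinates are ordered oppositely, which the paper only asserts.
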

 \begin{proof}
 The assumption $\gamma_j-\gamma_i < \min \{u,q^{s-1}\}$ for $1 \leq i<j \leq w$ implies~(\ref{eqmedogsaa}). Aiming for a contradiction assume $a_w-a_1 \geq u$, and recall that by the very definition of the function $\iota$ we have $b_1-b_w  \leq q^{s-1}-1$. We obtain 
 $$(a_w q^{s-1}+b_wu)-(a_1q^{s-1}+b_1u)\geq u q^{s-1}-(q^{s-1}-1)u=u$$
 which by assumption is impossible. Finally, the additional condition in combination with $\gamma_w-\gamma_1 < \min \{u,q^{s-1}\}$ ensures that $\iota (\gamma_i) \not{\! \! \!>_p} (u(q-2)+w,0)$, $i=1, \ldots , w$. But, the $a_i$s constitute a strictly increasing sequence and we are through (to avoid confusion be aware that $\iota(\gamma_1)$ needs not be equal to $(a_1,b_1)$).
 \end{proof}
 
 \begin{corollary}\label{cornuskerdet}
 Consider $\lambda_2 < \lambda_1 $ with $\lambda_2+1, \lambda_1 \in H^\ast(Q)$ and $\lambda_1-(\lambda_2+1) < \min\{u,q^{s-1}\}$ and $\lambda_2 < (u (q-2)+w)q^{s-1}$ and consider the nested codes $C_2=C_{\mathcal{L}}(G,\lambda_2Q) \subseteq C_{\mathcal{L}}(G,\lambda_1Q)=C_1$ the co-dimension $\ell$ of which equals $\# \big( H^\ast (Q) \cap \{\lambda_2+1, \ldots , \lambda_1\}\big)$. For $w \in \{1, \ldots , \ell\}$ let $\{\gamma_1, \ldots , \gamma_w \} \subseteq H^\ast (Q)$ with $\lambda_2 < \gamma_1 < \cdots < \gamma_w \leq \lambda_1$ be such that the minimum value is attained in~(\ref{eqvtbound}) among all possible choices of $\{ \gamma_1, \ldots \gamma_w\}$ of this form. I.e.\ the right hand side of (\ref{eqvtbound}) for the given $\gamma_1, \ldots , \gamma_w$ achieves the value of $M_w(C_1,C_2)$. The set of positions corresponding to common roots of related functions $f_1, \ldots  ,f_w$ as in Theorem~\ref{propsomething} constitutes a maximum non-$(\ell-w+1)$-qualifying set. In other words, by leaving out all participants corresponding to non-common roots of these functions the remaining participants cannot detect $(\ell-w+1)$ $\log_2(q)$ bits of information.
 \end{corollary}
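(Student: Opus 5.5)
Let $\gamma_1<\cdots<\gamma_w$ be the minimizing tuple from the statement. The plan is to show that the common zero set of the $f_j$ from Theorem~\ref{propsomething} has exactly the size $n-M_w(C_1,C_2)$, and then to use Theorem~\ref{thethree} to show that the corresponding shares do not determine $\ell-w+1$ field symbols of the secret.

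First we check that Theorem~\ref{propsomething} applies.
\begin{itemize}
\item All $\gamma_i$ lie in $\{\lambda_2+1,\ldots,\lambda_1\}$, so $\gamma_w-\gamma_1\le\lambda_1-(\lambda_2+1)<\min\{u,q^{s-1}\}$.
\item The lemma just proved then gives an enumeration with $a_1<\cdots<a_w$, $b_w<\cdots<b_1$ and $a_w-a_1<u$.
\item From $\gamma_1\le\lambda_2+1\le (u(q-2)+w)q^{s-1}$, the additional condition of the lemma holds, so $a_1\le u(q-2)+1$.
\end{itemize}
Hence Theorem~\ref{propsomething} yields functions $f_1,\ldots,f_w$ of the form~(\ref{eqf}) with $-\nu_Q(f_j)$ equal to the $\gamma$'s. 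Their evaluation vectors span a space $D$ with $\dim D=w$, since the $\nu_Q$-values are distinct and lie in $H^\ast(Q)$. Moreover $\#{\mbox{Supp}}(D)$ equals the right-hand side of~(\ref{eqvtbound}), which by the choice of the $\gamma$'s equals $M_w(C_1,C_2)$.

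Next we need $D\subseteq C_1$ and $D\cap C_2=\{\vec 0\}$. Inclusion in $C_1$ holds because each $f_j\in{\mathcal L}(\lambda_1Q)$. For the trivial intersection, take a nonzero combination $\sum c_jf_j$. Its pole order at $Q$ equals the largest $\gamma_j$ with $c_j\ne0$, and this value exceeds $\lambda_2$. It lies in $H^\ast(Q)$, and the basis description of $C_{\mathcal L}(G,\lambda Q)$ via monomials $x^iy^j$ with $(i,j)$ in the box shows that such a combination cannot evaluate into $C_{\mathcal L}(G,\lambda_2Q)$. So we may set $A$ to be the set of positions where all $f_j$ vanish, i.e.\ $A=\{1,\ldots,n\}\setminus{\mbox{Supp}}(D)$, and then $\#A=n-M_w(C_1,C_2)$.

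Finally we apply Theorem~\ref{thethree} to $A$. Since ${\mbox{Supp}}(D)\subseteq\bar A$, the number of secrets compatible with given shares on $A$ is $q^{s'}$ with $s'\ge w$. The secret has $\ell$ coordinates, so the shares on $A$ fix the secret only up to a space of dimension at least $w$. They therefore reveal at most $\ell-w$ field symbols, and so they cannot recover $\ell-w+1$ of them. Hence $A$ is non-$(\ell-w+1)$-qualifying.

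The size condition in the definition also matches: it requires $\#A=n-M_{\ell-(\ell-w+1)+1}(C_1,C_2)=n-M_w(C_1,C_2)$, which we have.

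The main obstacle is the middle step, namely the claim $D\cap C_2=\{\vec0\}$. The leading-term argument needs the $\gamma_j$ to be genuinely distinct elements of $H^\ast(Q)$ above $\lambda_2$, and it needs the reduction of $f_j$ modulo the ideal not to lower its order. I would handle this by writing each $f_j$ in the basis $\{x^iy^j\}$ and noting that its leading monomial $x^{a_j}y^{b_j}$ already lies inside the box $i<u(q-1)+1$, $j<q^{s-1}$.
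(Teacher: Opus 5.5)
Your overall route is the natural one, and the paper gives no separate proof of the corollary. You put the minimizing $\gamma$'s into the setting of Theorem~\ref{propsomething}, check $D\subseteq C_1$, $D\cap C_2=\{\vec 0\}$ and $\dim D=w$, and then apply Theorem~\ref{thethree} to the common zero set $A$. The size of $A$ is $n-M_w(C_1,C_2)$, which matches the definition with $i=\ell-w+1$. The step $D\cap C_2=\{\vec 0\}$, which you flag as the main obstacle, is actually routine. For $\gamma\in H^\ast(Q)$ the quotient ${\mathcal L}(\gamma Q)/{\mathcal L}((\gamma-1)Q)$ has dimension at most one. Hence any $f$ with $-\nu_Q(f)=\gamma$ evaluates outside $C_{\mathcal L}(G,(\gamma-1)Q)\supseteq C_{\mathcal L}(G,\lambda_2 Q)$, and this gives linear independence as well.

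The genuine flaw is in your verification of $a_1\le u(q-2)+1$. You write $\gamma_1\le\lambda_2+1$, but $\gamma_1>\lambda_2$ gives $\gamma_1\ge\lambda_2+1$, the reverse inequality. So the lemma's extra hypothesis $\gamma_1\le (u(q-2)+w)q^{s-1}$ is not established, and it can in fact fail under the corollary's hypotheses. Take the Hermitian curve over $\mathbb{F}_9$ ($q=3$, $s=2$, $u=4$, $n=27$) with $\lambda_2=14$, $\lambda_1=17$, $w=1$. All hypotheses of the corollary hold, and $\iota(15),\iota(16),\iota(17)$ are $(5,0),(4,1),(3,2)$. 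The right-hand side of~(\ref{eqvtbound}) equals $12,11,10$ respectively. So the minimizer is $\gamma_1=17>15=(u(q-2)+w)q^{s-1}$, and you cannot invoke the last sentence of the lemma as stated.

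The repair is to run the lemma's argument on the window $\{\lambda_2+1,\ldots,\lambda_1\}$ rather than on $\{\gamma_1,\ldots,\gamma_w\}$. From $\lambda_1\le\lambda_2+\min\{u,q^{s-1}\}$ and $\lambda_2\le (u(q-2)+w)q^{s-1}-1$ you get $a_iq^{s-1}\le\lambda_1<(u(q-2)+w+1)q^{s-1}$. Hence $a_i\le u(q-2)+w$ for every $i$. Since $a_1<\cdots<a_w$ are integers, $a_1\le a_w-(w-1)\le u(q-2)+1$. With this replacement the rest of your proof goes through.
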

 Corollary~\ref{cornuskerdet} in combination with the particular structure of $\{P_1, \ldots , P_n\}$ as well as the particular structure of each of $f_1, \ldots , f_w$ imposes a second layer of security in a large family of ramp secret sharing schemes defined from nested one-point algebraic geometric codes over the extended norm-trace curves.  Such schemes have families of well-structured sets of participants who cannot all be left out if  $(\ell-w+1)$ $\log_2(q)$ bits of information are to be retrieved. This is discussed in the following remark.
 
 \begin{remark}\label{remromstang}
 Recall, that the affine variety of the extended norm-trace curve equals the disjoint union $\cup_{v =0}^{q-1} A_v$ where $A_i =\Gamma_i^{(1)} \times \Gamma_i^{(2)}$, and where $\#\Gamma_i^{(2)}= q^{s-1}$ for $i=0,\ldots , q-1$, where $\# \Gamma_0^{(1)}=1$ and where for $i=1, \ldots , q-1$ $\# \Gamma_i^{(1)}=u$. To illustrate the second layer of security assume in the following that we have an organization with  $q-1$ large departments each having  $u q^{s-1}$ members. Say, $A_i$ corresponds to a large department $i$, $i=1, \ldots , q-1$. Assume further that we have a single small department of size $q^{s-1}$. This department corresponds to $A_0$. If $a_1 < u(q-2)+1$ then one may choose $\alpha_1, \ldots , \alpha_{a_1}$ in Theorem~\ref{propsomething} in such a way that the common roots of the $f_1, \ldots , f_w$ in~(\ref{eqf}) correspond to the disjoint union of the following sets:  $\lfloor \frac{a_1}{u}\rfloor$ entire large departments $A_{i_1} \cup \cdots \cup A_{i_{\lfloor a_1/u \rfloor}}$, and for some $i^{\prime \prime} \neq i^\prime$ both belonging to $\{1, \ldots , q-1\}\backslash \{i_1, \ldots , i_{\lfloor a_1/u \rfloor}\}$ a subset $S \times \Gamma_{i^{\prime \prime}}^{(2)}\subseteq A_{i^{\prime \prime}}$, where $S \subseteq \Gamma_{i^{\prime \prime}}^{(1)}$, $\#S=a_1-\lfloor \frac{a_1}{u} \rfloor$ and finally a subset of $A_{i^\prime}$ of size equal to $b_wu +\sum_{i=1}^{w-1}(a_{i+1}-a_i)(b_i-b_w)$ with the form of some (possibly irregular) staircase. We call the latter subset ``the set locally induced by $\gamma_1, \ldots , \gamma_w$'' (or ``the locally induced set'' for short). Observe, that $i^\prime$ plays the exact same role as in Theorem~\ref{propsomething}, whereas $i^{\prime \prime}$ is introduced to reflect a particular systematic way for choosing the elements in the left hand side of~(\ref{eqnuhenvisetil}). The subset $S \times \Gamma_{i^{\prime \prime}}^{(2)}$ as well a the locally induced set can be given their own meaning. This is done by dividing $A_{i^{\prime \prime}}$ and $A_{i^\prime}$, respectively, into $u$ horizontal levels each containing $q^{s-1}$ elements. Here, we enumerate the levels according to $\Gamma_{i^{\prime \prime}}^{(1)}$ and $\Gamma_{i^\prime}^{(1)}$, respectively. Similarly, we can in an obvious way divide $A_{i^{\prime \prime}}$ and $A_{i^\prime}$, respectively, into $q^{s-1}$ vertical levels each containing $u$ members. Hence, for instance $S \times \Gamma_{i^{\prime \prime}}^{(2)}$ consist of $a_1-\lfloor \frac{a_1}{u} \rfloor$ horizontal levels. Note, that for fixed $A_{i^{\prime \prime}}$ there are ${u}\choose{a_1-\lfloor a_1/u \rfloor }$ possibilities for that.
If $a_1=u(q-2)+1$ then there is in the set of common roots no set $S \times \Gamma_{i^{\prime \prime}}^{(2)}$ and we need to include the entire set $A_0$. Returning to the case $a_1 < u(q-2)+1$ we may of course also include $A_0$ in the zero-set, which then causes a minor change in how we may include the other departments.
The many different ways one can define $f_1, \ldots , f_w$ given fixed $\gamma_1, \ldots ,  \gamma_w$ provides us with many different very systematic patterns of common roots, and similarly, of course, of the same number of different very systematic patterns of non-common roots, the latter being those participants if all left out, the remaining participants cannot detect $\ell-w+1$ $\log_2(q)$ bits of information. This concludes the remark.
 \end{remark}

 We illustrate the idea with a couple of examples where the first describes a situation where the set of common roots (and therefore also the set of non-common roots) are as simple as can possibly be.

 \begin{example}
Let $q \geq 3$ be a prime-power, $s \geq 2$, and $u \geq 2$. Consider codes $C_2=C_{\mathcal{L}}(G,\lambda_2 Q) \subseteq C_{\mathcal{L}}(G,\lambda_1 Q)= C_1 \subseteq {\mathbb{F}}_{q^s}^n$ of co-dimension $\ell= 1$ defined by $\iota (\lambda_1)=(\tau u,0)$ with $\tau \in \{1, \ldots , q-2\}$. The conditions of Corollary~\ref{cornuskerdet} are clearly satisfied and by applying~(\ref{eqtm}), (\ref{eqrm}), (\ref{eqbrugesmaaske}) and (\ref{eqsnabel49pkt5}) we obtain
\begin{eqnarray}
t&=&M_1(C_2^\perp, C_1^\perp)-1 = (\tau -1)u q^{s-1}+q^{s-1}+u-1 \nonumber \\
r&=&n-M_1(C_1,C_2)+1 =\tau u q^{s-1}+1 \nonumber \\
r-t&=&(q^{s-1}-1)(u-1)+1 \nonumber 
\end{eqnarray}
In the following we use the language of Remark~\ref{remromstang}. 
Now for all $1 \leq i_1 < \cdots < i_\tau \leq q-1$ the set $A_{i_1}\cup \cdots \cup A_{i_\tau}$ 
  constitutes a maximum non-$1$-qualifying set. Hence, by leaving out all members of $(q-1)-\tau$ large departments and the small department one does not obtain any information. In particular if $\tau=q-2$ one cannot leave out an entire large department in combination with the small department if one wants to generate information. 
  Next let $\iota (\lambda_1)=(\tau u +1, 0)$ with $\tau \in \{1, \ldots , q-2\}$. We obtain
  $$r=(\tau u +1)q^{s-1}+1$$
  with the same value of $r-t$ as before. By leaving out all members of any set of $(q-1)-\tau $ large departments one does not obtain any information. And by leaving out any $(q-1)-\tau-1$ large departments, the small department as well as $S\times \Gamma_i^{(2)}$ where $A_i$ $i\in \{1, \ldots , q-1\}$ is not one of the already left out large department and where $S \subseteq \Gamma_i^{(1)}$ is of size equal to $u-1$ one neither obtains any information. The complementary sets are maximum non-$1$-qualifying. In particular if $\tau =q-2$ one cannot leave out an entire large department and obtain any information. Similarly, one cannot leave out the small department and $S \times \Gamma_i^{(2)}$, $i \in \{1, \ldots , q-1\}$ where $S$ is given as above.
 \end{example}
 
 \begin{example}
 Consider the Hermitian curve $x^{q+1}-y^q-y$ over ${\mathbb{F}}_{q^2}$. We have $u=q+1$, $s=2$, and $q^{s-1}=q$. Let $\lambda_2+1=a_\ell q$ where $q-1 \leq a_\ell < u (q-2)+1=(q+1)(q-2)+1$. Let $\ell = q$ and $\lambda_1= \lambda_2+\ell$. Then $\lambda_2+1, \ldots , \lambda_2+\ell$ all belong to $H^\ast (Q)$. Moreover, we have $\iota (\lambda_2+1+i)=(a_\ell -i,i)$ for $i=0, \ldots , \ell-1$. Hence, with $C_2\subseteq C_1$ as in Corollary~\ref{cornuskerdet} all conditions therein are satisfied. Consider the related ramp secret sharing scheme. We have
 \begin{eqnarray}
 t_1&=& M_1(C_2^\perp , C_1^\perp)-1=(a_\ell +1)+(q-1)(a_\ell -q)-1\nonumber \\
 t_\ell&=& M_\ell (C_2^\perp, C_1^\perp ) -1 =\big( (a_{\ell}-(q-1) )q +\sum_{i=0}^{\ell -1} i\big)  -1 = (a_{\ell}-q+1)q+\frac{(q-1)q}{2}-1\nonumber \\
 r_1&=&n-M_\ell(C_1,C_2) +1=a_\ell q-(\sum_{i=1}^\ell (i-1))+1=a_\ell q  -\frac{(q-1)q}{2} +1 \nonumber \\
 r_\ell&=& n-M_{1}(C_1,C_2)+1=n-(n -\lambda_1) +1=(a_\ell +1)q.\nonumber
 \end{eqnarray}
Using the approach described in Remark~\ref{remromstang} we have maximum non-$1$-qualifying sets of the following form. Namely,  the disjoint union of $\lfloor \frac{a_\ell - (q-1)}{q+1}$ large departments, the following subset of a large department  $S \times \Gamma_{i^{\prime \prime }}\subseteq A_{i^{\prime \prime}}$, with $\#S= a_\ell -(q-1) - \lfloor (a_\ell-(q-1))/(q+1)\rfloor$ and finally a locally induced set (subset of $A_{i^{\prime}}$)  which has the form of a staircase each step having height equal to $1$. We obtain maximum non-$\ell$-qualifying sets as above, but with the latter locally induced set being replaced by $q-1$ vertical levels each containing $q+1$ elements. One, of course can also investigate non-$i$-qualifying sets with $1 < i < \ell=q$, but we shall refrain from that in the present exposition.
 \end{example}

 \begin{example}
 In this example we consider $q=4$, $s=3$ and $u=\frac{(q^s-1)}{(q-1)3}=7$. We obtain codes over ${\mathbb{F}}_{64}$ of length $n=(u(q-1)+1 ) q^{s-1}= 352$. We have $88, 89 \notin H^\ast(Q) \subseteq \langle 7, 16 \rangle$, but $87, 90, 91, 92 \in H^\ast(Q)$. We shall consider a pair of nested codes of co-dimension $\ell =3$, namely, $C_2\subseteq C_1$ where $C_2 =C_{\mathcal{L}}(G,87Q)$ and $C_1=C_{\mathcal{L}}(G,92Q)$ for which we note that all conditions of Corollary~\ref{cornuskerdet} are satisfied. We have $\iota (90)=(3,6)$, $\iota (91)=(0,13)$, and $\iota (92)=(4,4)$ Applying~(\ref{eqbrugesmaaske}) we calculate
 \begin{eqnarray}
M_1(C_1,C_2)&=& \min \{ 262, 261, 260\} = 260 \label{eqjkj1} \\
M_2(C_1,C_2)&=&\min \{ 289, 276, 274\} = 274\label{eqjkj2} \\
M_3(C_1,C_2)&=& 295, \label{eqjkj3}
\end{eqnarray}
 where the three values on the right hand side of (\ref{eqjkj1}) each corresponds to a calculation concerning one of the values $90, 91, 92$ the minimum being attained for $92$. The three values on the right hand sides of (\ref{eqjkj2}) each corresponds to a calculation concerning a pair of such values, the minimum being attained for $\{90, 92\}$. Finally, (\ref{eqjkj3}) corresponds to a calculation concerning all three numbers $90, 91, 92$ at the same time. 
 To calculate the relative weights of the dual codes we apply~(\ref{eqsnabel49pkt5}) directly. We obtain
\begin{eqnarray}
M_1(C_2^\perp , C_1^\perp) &=& \min \{ 14, 28, 25 \} =14 \label{eqjkj4} \\
M_2(C_2^\perp, C_1^\perp)&=&\min \{ 35, 34, 33\} =33 \label{eqjkj5} \\
M_3(C_2^\perp,C_1^\perp)&=&  40. \label{eqjkj6}
\end{eqnarray}
 From the above in combination with~(\ref{eqtm}) and (\ref{eqrm}) we obtain
 $t_1=259$, $t_2=273$, $t_3=294$, $r_1=313$, $r_2=320$, and $r_3=339$. To detect maximum non-$1$-qualifying sets we should apply Remark~\ref{remromstang} to $\{\iota(90) \iota(91), \iota (92)\}$. Here, $a_1=0$ and therefore all the sets we obtain consist of a locally induced set and nothing else. So leaving out the entire set of members from all departments but one large department, there are limits as to which patterns of members from the remaining department that could be left out. To detect maximum non-$2$-qualifying sets we should consider $\{ \iota (90), \iota(92)\}$. Here, $a_1=3$. Hence, we obtain sets consisting of a Cartesian product $S \times \Gamma_{i^{\prime \prime}}^{(2)}$, $\#S=3$ in combination with a locally induced set. Finally, to detect maximum non-$1$-qualifying sets we should consider $\{ \iota ( 92)\}$. Here, $a_1=4$, and we obtain sets that are the union of a Cartesian product $S \times \Gamma_{i^{\prime \prime}}^{(2)}$, $\#S=4$ and a locally induced set. 
 \end{example}

\begin{remark}\label{remsidste}
Revisiting  Theorem~\ref{propsomething} and the conditions therein keep the assumption $\gamma_1 < \cdots < \gamma_w $ in $H^\ast(Q)$ again with $\iota (\{ \gamma_1 , \ldots , \gamma_w\})=\{(a_1,b_1), \ldots , (a_w,b_w)\}$ and with the condition that $a_1 < \cdots < a_w$ and $b_w < \cdots < b_1$. But instead of requiring  $a_1 \leq u(q-2)+1$ assume now $u(q-2)+1 < a_1$. Inspecting~(\ref{eqvtbound}) we see that the last option does not come into action as $u(q-1)+1 < a_1+u$ which does not correspond to the first coordinate of any $\iota (\delta)$ where $\delta \in H^\ast(Q)$. Hence, the right hand side of~(\ref{eqvtbound}) becomes 
\begin{eqnarray}
\# \{ (i,j) \mid 0 \leq i < u(q-1)+1, 0 \leq j \leq q^{s-1}-1,  (a_s,b_s) <_p (i,j) {\mbox{ \ \ \ \ }}\nonumber \\
{\mbox{ for some }} s\in \{1, \ldots , w\} \}.\nonumber 
\end{eqnarray}
But then the situation is similar to that of monomial-Cartesian codes regarding parameters of primary codes (see~\cite[Eq.\ (5)]{geil2026considerate}) and the maximum non-$i$-qualifying sets have a similar structure, however with a little less freedom of choice~(see~\cite[Thm.\ 6]{geil2026considerate}). The advantage of using codes from extended norm-trace curves for such high value of $a_1$ does not lie in the reconstruction numbers, nor the structure of maximum non-$i$-qualifying sets. Rather, it is the privacy numbers that are improved.
\end{remark}
We illustrate Remark~\ref{remsidste} with a final example.
\begin{example}
We first consider nested one-point algebraic geometric codes over ${\mathbb{F}}_{16}$ using the Hermitian curve $X^5-Y^4-Y$. We have $\iota (66)=(14,2)$ and $\iota (67)=(13,3)$ both $66$ and $67$ belonging to $H^\ast(Q)$ as $$\iota ( H^\ast (Q))=\{(i,j) \mid 0 \leq i < 16, 0 \leq j < 4\}.$$ Hence, the co-dimension of $C_2=C_{\mathcal{L}}(G,65Q)\subseteq C_{\mathcal{L}}(G,67Q)=C_1$ is $\ell=2$. We have 
\begin{eqnarray} 
M_2(C_1,C_2)&=&\# \{ (14,2), (13,3), (14,3), (15,2), (15,3)\}=5 \nonumber \\
M_1(C_1,C_2)&=&\min \{ \# \{ ((13,3), (14,3), (15,3)\}, \# \{(14,2), 14,3), (15,2), (15,3)\} \}\nonumber \\
&=&3 \nonumber 
\end{eqnarray}
We have maximum non-$1$-qualifying sets of the form: all $64$ points in the entire affine variety except $$\{ (\alpha_1^\prime, \beta_1), (\alpha_2^\prime, \beta_1), (\alpha_1^\prime, \beta_2), (\alpha_2^\prime, \beta_2), (\alpha_3^\prime, \beta_2)\}$$
where we use the notation from Theorem~\ref{propsomething} with the arbitrary enumeration from there. We have maximum non-$2$-qualifying sets of the form: all $64$ points except $$\{ (\alpha_1^\prime, \beta_1), (\alpha_2^\prime, \beta_1), (\alpha_3^\prime, \beta_1)\}.$$ Turning to comparable monomial-Cartesian codes we consider as point set (affine variety) now ${\mathbb{F}}_{16} \times S_2=\{ P_1^\prime, \ldots , P_{64}^\prime\}$ i.e.\ $\# S_2 =4$. We write ${\mathbb{F}}_{16}=\{ \delta_1, \ldots , \delta_{16}\}$ and $S_2=\{ \epsilon_1, \ldots , \epsilon_4\}$ and consider an evaluation map  ${\mbox{ev}}: {\mathbb{F}}_{16}[X,Y] \rightarrow {\mathbb{F}}_{16}^{64}$ given by ${\mbox{ev}}(F)=(F(P_1^\prime), \ldots , F(P_{64}^\prime))$. Define 
\begin{eqnarray}
C_2^\prime &=&{\mbox{Span}}_{\mathbb{F}_{16}} \{ {\mbox{ev}} (X^iY^j) \mid 0 \leq i < 16, 0 \leq j < 4, 4i+5j \leq 65\} \subseteq \nonumber \\
C_1^\prime &=&{\mbox{Span}}_{\mathbb{F}_{16}} \{ {\mbox{ev}} (X^iY^j) \mid 0 \leq i < 16, 0 \leq j < 4, 4i+5j \leq 67\}.\nonumber 
\end{eqnarray} 
The relative parameters are the same as before (see~\cite[Eq.\ (5)]{geil2026considerate}), but now we have that
the entire affine variety (still of size $64$) minus $$\{(\delta_1, \epsilon_1), (\delta_2,\epsilon_1), (\delta_1, \epsilon_2), (\delta_2,\epsilon_2), (\delta_3,\epsilon_2)\}$$ is a maximum non-$1$-qualifying set (see~\cite[Thm.\ 6]{geil2026considerate}). Similarly, the entire affine variety minus $$\{(\delta_1,\epsilon_1), (\delta_2, \epsilon_1), (\delta_3, \epsilon_1)\}$$ is a maximum non-$2$-qualifying set. Due to the many ways one can enumerate the elements of ${\mathbb{F}}_{16}$ and similarly enumerate the elements of $S_2$ the second layer of security becomes much stronger for the secret sharing scheme defined from the latter pair of codes. 
\end{example}

In this paper we only considered the case of not too large co-dimension of two one-point algebraic geometric codes $C_2 \subseteq C_1$ imposing a situation where $a_1 < \cdots < a_w$ and $b_w < \cdots < b_1$. It is also possible to consider higher co-dimension for which we refer the reader to employ the last part of~\cite[Prf. of Thm.\ 3.2]{camps2025weight}.  
 
 \section{Concluding Remarks}\label{secconclusion}
The second layer of security resulting from families of maximum non-$i$-qualifying sets of systematic form was first considered in~\cite{geil2026considerate} in the case of monomial-Cartesian codes. In the present work it was analyzed for schemes defined from nested one-point algebraic geometric codes defined over extended norm-trace curves. There should be other algebraic codes for which the resulting schemes have a second layer of security. In this paper we demonstrated that the ``footprint-like approach'' applied in~\cite{camps2025weight} is merely a clever application of the enhanced Goppa bound in~\cite{geil2014relative} than a competing method. A similar remark could be made concerning~\cite{normtrace} which, however, predates~\cite{geil2014relative}. In the opinion of the author of this contribution, indeed the ``footprint-like approach'' may be beneficial, but merely in connection with affine variety codes not defined from $\cup_{m=0}^\infty {\mathcal{L}}(mQ)$ where $Q$ is a rational place in an algebraic function field of transcendence degree $1$. The literature contains several examples of such studies including codes defined from other structures related to function fields of transcendence degree $1$ than the above union of ${\mathcal{L}}$-spaces, e.g.~\cite{geilozbudak}.


\begin{thebibliography}{10}
\providecommand{\url}[1]{\texttt{#1}}
\providecommand{\urlprefix}{URL }
\providecommand{\doi}[1]{https://doi.org/#1}

\bibitem{MR1737936}
Barbero, A.I., Munuera, C.: The weight hierarchy of {H}ermitian codes. SIAM J.
  Discrete Math.  \textbf{13}(1),  79--104 (electronic) (2000).

\bibitem{Blakley}
Blakley, G.R., Meadows, C.: Security of ramp schemes. In: Advances in
  cryptology ({S}anta {B}arbara, {C}alif., 1984), Lecture Notes in Comput.
  Sci., vol.~196, pp. 242--268. Springer, Berlin (1985)

\bibitem{bras2008duality}
Bras-Amor{\'o}s, M., O'Sullivan, M.E.: Duality for some families of correction
  capability optimized evaluation codes. Adv.\ Math.\ Commun.  \textbf{2}(1),
  15--33 (2008)

\bibitem{camps2025weight}
Camps-Moreno, E., L{\'o}pez, H.H., Matthews, G.L., San-Jos{\'e}, R.: The weight
  hierarchy of decreasing norm-trace codes. Des.\ Codes Cryptogr.
  \textbf{93}(7),  2873--2894 (2025)

\bibitem{carvalho2024decreasing}
Carvalho, C., L{\'o}pez, H.H., Matthews, G.L.: Decreasing norm-trace codes.
  Des.\ Codes Cryptogr.  \textbf{92}(5),  1143--1161 (2024)

\bibitem{cascudo2013bounds}
Cascudo, I., Cramer, R., Xing, C.: Bounds on the threshold gap in secret
  sharing and its applications. IEEE Trans.\ Inform.\ Theory  \textbf{59}(9),
  5600--5612 (2013)

\bibitem{ChenCramer}
Chen, H., Cramer, R., Goldwasser, S., de~Haan, R., Vaikuntanathan, V.: Secure
  computation from random error correcting codes. In: Advances in
  cryptology---{EUROCRYPT} 2007, Lecture Notes in Comput. Sci., vol.~4515, pp.
  291--310. Springer, Berlin (2007)

\bibitem{csirmaz2009ramp}
Csirmaz, L.: Ramp secret sharing and secure information storage. Preprint
  (2009)

\bibitem{MR3782266}
Galindo, C., Geil, O., Hernando, F., Ruano, D.: Improved constructions of
  nested code pairs. IEEE Trans. Inform. Theory  \textbf{64}(4, part 1),
  2444--2459 (2018)

\bibitem{normtrace}
Geil, O.: On codes from norm-trace curves. Finite Fields Appl.  \textbf{9}(3),
  351--371 (2003)

\bibitem{geil2026considerate}
Geil, O.: Considerate ramp secret sharing. Des.\ Codes Cryptogr.
  \textbf{94}(3), ~49 (2026)

\bibitem{geil2014relative}
Geil, O., Martin, S., Matsumoto, R., Ruano, D., Luo, Y.: Relative generalized
  {H}amming weights of one-point algebraic geometric codes. IEEE Trans. Inform.
  Theory  \textbf{60}(10),  5938--5949 (2014)
  
  \bibitem{geilozbudak}
Geil, O., \"{O}zbudak, F.: On affine variety codes from the {K}lein quartic.
  Cryptogr. Commun.  \textbf{11}(2),  237--257 (2019)
 
\bibitem{handbook}
H{\o}holdt, T., van Lint, J.H., Pellikaan, R.: Algebraic geometry codes. In:
  Pless, V.S., Huffman, W.C. (eds.) Handbook of Coding Theory, vol.~1, pp.
  871--961. Elsevier, Amsterdam (1998)

\bibitem{janwa2016parameters}
Janwa, H., Pi{\~n}ero, F.L.: On parameters of subfield subcodes of extended
  norm-trace codes. arXiv preprint arXiv:1604.05777  (2016)

\bibitem{kurihara}
Kurihara, J., Uyematsu, T., Matsumoto, R.: Secret sharing schemes based on
  linear codes can be precisely characterized by the relative generalized
  {H}amming weight. IEICE Trans. Fundamentals  \textbf{E95-A}(11),  2067--2075
  (2012)

\bibitem{matthews2022norm}
Matthews, G.L., Murphy, A.W.: Norm-trace-lifted codes over binary fields. In:
  Proceedings of 2022 IEEE International Symposium on Information Theory (ISIT). pp.
  3079--3084. IEEE (2022)
  
\bibitem{patel2024efficient}
Patel, S., Persiano, G.,  Seo, J. Y.,  Yeo, K.: Efficient secret sharing for large-scale applications. In: Proceedings of the 2024 ACM SIGSAC Conference on Computer and Communications Security. pp. 3065--3079 (2024)

\bibitem{pellikaanexists}
Pellikaan, R.: On the existence of order functions. J. Statist. Plann.
  Inference  \textbf{94}(2),  287--301 (2001)

\bibitem{stichtenothhermitian}
Stichtenoth, H.: A note on {H}ermitian codes over {GF($q^2$)}. IEEE Trans.
  Inform. Theory  \textbf{34}(5),  1345--1348 (1988)

\bibitem{Yamamoto}
Yamamoto, H.: Secret sharing system using {$(k,L,n)$} threshold scheme.
  Electron. Comm. Japan Part I Comm.  \textbf{69}(9),  46--54 (1986)

\bibitem{yanghermitian}
Yang, K., Kumar, P.V.: On the true minimum distance of {H}ermitian codes. In:
  Coding theory and algebraic geometry, pp. 99--107. Springer (1992)

\end{thebibliography}
\end{document}